\newcommand{\dpar}[1]{\left(#1\right)}
\newcommand{\dsqr}[1]{\left[#1\right]}
\newcommand{\E}{\mathbb{E}}
\newtheorem*{thm}{Theorem}
\journal{Physica A}
\begin{document}

\begin{frontmatter}



\title{Characterization of Time Series Via R\'enyi Complexity-Entropy Curves}

\author[uem]{M. Jauregui}
\author[zunino1,zunino2]{L. Zunino}
\author[ekl]{E. K. Lenzi}
\author[uem]{R. S. Mendes}
\author[uem]{H. V. Ribeiro}\ead{hvr@dfi.uem.br}

\address[uem]{Departamento de Física, Universidade Estadual de Maringá, Maringá, PR 87020-900, Brazil}

\address[zunino1]{Centro de Investigaciones \'Opticas (CONICET La Plata - CIC), C.C. 3, 1897 Gonnet, Argentina}

\address[zunino2]{Departamento de Ciencias B\'asicas, Facultad de  Ingenier\'ia, Universidad Nacional de La Plata (UNLP), 1900 La  Plata, Argentina}

\address[ekl]{Departamento de F\'isica, Universidade Estadual de Ponta Grossa, Ponta Grossa, PR 84030-900, Brazil}

\begin{abstract}
One of the most useful tools for distinguishing between chaotic and stochastic time series is the so-called complexity-entropy causality plane. This diagram involves two complexity measures: the Shannon entropy and the statistical complexity. Recently, this idea has been generalized by considering the Tsallis monoparametric generalization of the Shannon entropy, yielding complexity-entropy curves. These curves have proven to enhance the discrimination among different time series related to stochastic and chaotic processes of numerical and experimental nature. Here we further explore these complexity-entropy curves in the context of the R\'enyi entropy, which is another monoparametric generalization of the Shannon entropy. By combining the R\'enyi entropy with the proper generalization of the statistical complexity, we associate a parametric curve (the  R\'enyi complexity-entropy curve) with a given time series. We explore this approach in a series of numerical and experimental applications, demonstrating the usefulness of this new technique for time series analysis. We show that the R\'enyi complexity-entropy curves enable the differentiation among time series of chaotic, stochastic, and periodic nature. In particular, time series of stochastic nature are associated with curves displaying positive curvature in a neighborhood of their initial points, whereas curves related to chaotic phenomena have a negative curvature; finally, periodic time series are represented by vertical straight lines.
\end{abstract}

\begin{keyword}
time series \sep R\'enyi entropy \sep complexity measures \sep ordinal patterns probabilities



\end{keyword}

\end{frontmatter}

\section{Introduction}
Quantifying the degree of complexity of a system is a common task when studying the most diverse complex systems. This task usually starts by constructing a time series and then considering a complexity measure. Since there is an inherent difficulty in defining the concept of complexity, researchers have employed several approaches/theories as possible complexity measures. A non-exhaustive list includes entropies~\cite{Shannon1948}, relative entropies~\cite{KullbackLeibler1951}, algorithmic complexities~\cite{Kolmogorov1965}, fractal dimensions~\cite{Mandelbrot}, Lyapunov exponents~\cite{Lyapunov}, and other tradicional nonlinear time series methods~\cite{perc2005nonlinear}. However, most of these approaches suffer from the drawbacks of being strongly sensitive on tuning parameters, hindering the reproducibility of results.

A possible way to overcome these difficulties is to employ the permutation entropy,
introduced by Bandt and Pompe~\cite{BandtPompe2002}. This complexity measure is basically the Shannon entropy of the distribution of the permutations associated with $d$-dimensional partitions $(x_k,x_{k+1},\ldots,x_{k+d-1})$ of a time series $(x_1,\ldots,x_m)$. It is common to choose $d\in\{3,4,5,6,7\}$ in most practical applications, in such a way that the number of permutations $d!$ is much smaller than $m$. For this reason, the computational cost of computing the permutation entropy is usually lower than the ones related to other complexity measures. Also, the idea of associating permutations with finite-dimensional partitions of a time series allows the application of this method to time series of arbitrary nature. These remarks agree with the fact that the method of Bandt and Pompe is already widely spread over the scientific community~\cite{Bian_etal2012,ribeiro2012complexity,Ribeiro_etal2012,Aragoneses_etal2013,LiZuntao2014,Weck_etal2015,Yang_etal2015,Aragoneses_etal2016,LinKhurramHong2016,Zunino2016679}.

In some cases, the permutation entropy can distinguish among time series of regular, chaotic and stochastic behavior. However, Rosso~\textit{et al.}~\cite{RossoLarrondoMartinPlastinoFuentes2007} have demonstrated that this complexity measure alone is not enough for properly performing this task. For instance, they have shown that time series related to the logistic map at fully developed chaos and time series associated with power-law correlated noises can display practically the same value of permutation entropy. Mainly because of that, Rosso~\textit{et al.} have employed the joint use of the permutation entropy and another complexity measure, called the statistical complexity~\cite{LopezManciniCalbet1995,anteneodo1996some,LambertiMartinPlastinoRosso2004}. The statistical complexity is basically the product of the permutation entropy by a distance between the distribution of the permutations and the uniform distribution. Having the values of the permutation entropy $H$ and the statistical complexity $C$ associated with a given time series, Rosso \textit{et al.} have represented this series by a point $(H,C)$ in a diagram of $C$ versus $H$. This diagram is the so-called complexity-entropy causality plane, where the term causality refer to the fact that temporal correlations are taken into account by the Bandt and Pompe approach. In this representation space, time series of chaotic and stochastic nature are represented by points located in different regions, that is, noise and chaos can be distinguished by using the complexity-entropy causality plane. 

However, we have recently depicted several situations in which the values of $H$ and $C$ are not enough for distinguishing among time series of distinct nature~\cite{RibeiroJaureguiZuninoLenzi2017}; for instance, the points $(H,C)$ can become very close to each other for time series displaying different periodic and chaotic behaviors. Motivated by this fact, we have extended the causality plane for considering the Tsallis~\cite{Tsallis1988} monoparametric generalization of the Shannon entropy~\cite{RibeiroJaureguiZuninoLenzi2017}. The values of the parameter of the Tsallis entropy give different weights to the probabilities associated with the permutations; consequently, different dynamical scales of the system are accessed by varying the entropy parameter. In that article, we associated parametric curves with time series based on the different values of $(H,C)$ obtained by changing the Tsallis entropy parameter, a representation that we call the complexity-entropy curve. These curves have proven to enhance the differentiation of time series of regular, chaotic and stochastic nature even in cases in which the usual complexity-entropy causality plane does not provide useful information. 

On the other hand and similarly to what happens with the concept of complexity, there are several other entropy definitions in the context of information theory~\cite{gray2011entropy,cover2012elements,beck2009generalised}. These different entropies allow us to explore, capture and quantify different forms of complexity, leading us to more suitable descriptions for characterizing the most diverse complex systems addressed by physicists. Here we further explore this idea by considering the R\'enyi entropy~\cite{renyi1961} in place of the Tsallis entropy~\cite{Tsallis1988}. The R\'enyi entropy is also a monoparametric generalization of the Shannon entropy, which has been employed in several contexts such as medical/diagnostics applications~\cite{kannathal2005entropies}, time-frequency analysis~\cite{baraniuk2001measuring}, quantum entanglement measures~\cite{portesi1996generalized,islam2015measuring}, and image thresholding~\cite{sahoo2004thresholding}. Therefore, in analogy to the Tsallis entropy case, we shall associate a parametric curve with a given time series (the \textit{R\'enyi complexity-entropy curve}), and by exploring some properties of this curve, we can characterize the time series under study. Among other findings, we show that the curvature of these curves identifies whether a time series is of a stochastic or a chaotic nature, and that periodic time series are represented by vertical straight lines.

The organization of the article is as follows. Section~\ref{sec:defin-stat-compl} provides the definitions of the R\'enyi entropy and the R\'enyi statistical complexity. Section~\ref{sec:renyi-compl-entr} gives a brief description of the method of Bandt and Pompe for defining the ordinal probabilities from a given time series. We also work out a list of general properties of the R\'enyi complexity-entropy curves. In Section~\ref{sec:char-some-time}, we analyze several time series of chaotic and stochastic nature, obtained by numerical procedures or by experimental measurements. Finally, we conclude in Section~\ref{sec:conclusions}.

\section{A definition of a statistical complexity based on the R\'enyi entropy}
\label{sec:defin-stat-compl}
The R\'enyi entropy of a discrete probability distribution $p=(p_1,\ldots,p_n)$ is defined as~\cite{renyi1961}
\begin{equation}
  S_\alpha(p)=\frac{1}{1-\alpha}\ln\sum_{i=1}^np_i^\alpha\,,\quad \alpha>0\,,\,\alpha\ne 1\,.
\end{equation}
From this definition, we immediately note that $S_\alpha(p)$ recovers the Shannon entropy of $p$ when $\alpha$ tends to $1$. We can further verify that the maximum value of the R\'enyi entropy is equal to $\ln n$ (as in the Shannon entropy case), which happens when the uniform distribution $u=(1/n,\ldots,1/n)$ is considered. This enables us to define the normalized R\'enyi entropy of $p$ as
\begin{equation}
  \label{eq:Halpha}
  H_\alpha(p)=\frac{S_\alpha(p)}{\ln n}\,.
\end{equation}

By following Martin, Plastino and Rosso~\cite{MartinPlastinoRosso2006}, we define the R\'enyi statistical complexity of $p$ as
\begin{equation}
  \label{eq:Calpha}
  C_\alpha(p)=\frac{D_\alpha(p)H_\alpha(p)}{D_\alpha^*}\,,
\end{equation}
where
\begin{multline}
  \label{eq:Dalpha}
  D_\alpha(p)=\frac{1}{2(\alpha-1)}\left[\ln\sum_{i=1}^np_i^\alpha\dpar{\frac{p_i+1/n}{2}}^{1-\alpha}\right.\\
  \left.+\ln\sum_{i=1}^n\frac{1}{n^\alpha}\dpar{\frac{p_i+1/n}{2}}^{1-\alpha}\right]
\end{multline}
and
\begin{equation}
  \label{eq:Dmax}
  D_\alpha^*=\frac{1}{2(\alpha-1)}\ln\dsqr{\frac{(n+1)^{1-\alpha}+n-1}{n}\dpar{\frac{n+1}{4n}}^{1-\alpha}}\,.
\end{equation}
The quantity $D_\alpha(p)$ is always non-negative~\cite{ErvenHarremos2014} and can be interpreted as a distance between the distribution $p=(p_1,\ldots,p_n)$ and the uniform distribution $u=(1/n,\ldots,1/n)$. In fact, $D_\alpha(p)$ is a generalization of the Jensen-Shannon divergence of $p$ and $u$~\cite{Lin1991}. The quantity $D_\alpha^*$ represents the maximum value of $D_\alpha(p)$, which is reached when $p$ has only one non-zero component. By combining this fact with Eq.~(\ref{eq:Calpha}), we have that $0\le C_\alpha(p)\le 1$.

\section{R\'enyi complexity-entropy curves in the framework of Bandt and Pompe}
\label{sec:renyi-compl-entr} 
We start by briefly describing the method of Bandt and Pompe~\cite{BandtPompe2002} for defining the ordinal probabilities from a given time series $(x_1,\ldots,x_m)$. Fixed an integer $d>1$ such that $d!\ll m$, we consider the set $E$ of all $d$-dimensional vectors $(x_k,\ldots,x_{k+d-1})$, where $k=1,\ldots,m-d+1$. We next define a mapping $\Pi$ from $E$ into the set of all permutations of the set $\{0,\ldots,d-1\}$ such that $\Pi(x_k,\ldots,x_{k+d-1})=\pi$, where the permutation $\pi$, which can be represented as a vector $(\pi(0),\ldots,\pi(d-1))$, satisfies the conditions
\begin{enumerate}
\item[(i)] $x_{k+\pi(0)}\le\ldots\le x_{k+\pi(d-1)}$;
\item[(ii)] if $x_{k+\pi(j)}=x_{k+\pi(j+1)}$, then $\pi(j)<\pi(j+1)$.
\end{enumerate}
Finally, we define the probabilities
\begin{equation}
  \label{eq:BP}
  p(\pi)=\frac{\#\{v\in E:\Pi(v)=\pi\}}{m-d+1}\,,
\end{equation}
where the symbol $\#$ denotes cardinality.

To illustrate the procedure described above, let us consider the hypothetical time series $(3,5,1,6,6,4)$. By choosing $d=2$, the vectors in $E$ are $(3,5)$, $(5,1)$, $(1,6)$, $(6,6)$ and $(6,4)$. Then, $\Pi(3,5)=\Pi(1,6)=\Pi(6,6)=(0,1)$ and $\Pi(5,1)=\Pi(6,4)=(1,0)$. Hence, we have the probabilities $p(0,1)=3/5$ and $p(1,0)=2/5$.

Having the probability distribution associated with the time series $(x_1,\ldots,x_m)$, we calculate the normalized R\'enyi entropy and the statistical complexity by using Eqs.~(\ref{eq:Halpha}) and~(\ref{eq:Calpha}) with $n=d!$. In this manner, for each embedding dimension $d$, we construct a parametric curve $C_\alpha(p)$ versus $H_\alpha(p)$, considering $\alpha$ as a real parameter that takes values in the interval $(0,\infty)$. We call these curves as the \textit{R\'enyi complexity-entropy curves} by analogy with the $q$-complexity-entropy curves obtained using the Tsallis entropy~\cite{RibeiroJaureguiZuninoLenzi2017}. Since
$H_\alpha(p)$ is a monotonically non-increasing function of $\alpha$~\cite{ErvenHarremos2014,BeckSchogl}, R\'enyi complexity-entropy curves are never closed, \textit{i.e.}, they do not form loops, in contrast with $q$-complexity-entropy curves, which are likely to form loops for time series related to stochastic processes.

We can prove the following general properties of the R\'enyi complexity-entropy curves associated with an arbitrary time series in the framework of Bandt and Pompe, considering an embedding dimension $d$ (see~\ref{sec:limit-values} for details):
\begin{enumerate}
\item[(i)] If only one permutation occurs --- for instance, if the time series is strictly monotonic --- the R\'enyi complexity-entropy curve reduces to the single point~$(0,0)$.
\item[(ii)] If all the allowed permutations occur, the R\'enyi complexity-entropy curve begins at the point $(1,0)$, corresponding to $\alpha\downarrow 0$ (that is, when $\alpha$ tends to zero from the right), and ends at the point $(h_f,c_f)$, corresponding to $\alpha\to\infty$. From $h_f$, we obtain the value of the maximum component $p_M$ of the probability distribution by using the simple relation $p_M=(d!)^{-h_f}$. The ending point $(h_f,c_f)$ also gives us the value of the minimum component $p_m$ of the probability distribution by the relation
\begin{equation}
  p_m=\frac{4p_M}{d!p_M+1}\dpar{\frac{d!+1}{4d!}}^{c_f/h_f}-\frac{1}{d!}\,.
\end{equation}
\item[(iii)] If $r$ permutations occur, with $1<r<d!$, then the R\'enyi complexity-entropy curve begins at a point $(h_i,c_i)$ and ends at a point $(h_f,c_f)$. The starting point gives us the value of $r$ by means of the simple relation $r=(d!)^{h_i}$. From the ending point, we obtain the value of the maximum component $p_M$ of the probability distribution by using the relation $p_M=(d!)^{-h_f}$. In this case, the minimum component of $p$ is clearly zero.
\end{enumerate}

An interesting consequence of item (iii) happens when the $r$ permutations that actually occur have the same probability, namely $1/r$. In this case, we have that $p_M=1/r$, and from $r=(d!)^{h_i}$ and $p_M=(d!)^{-h_f}$, we obtain that $h_f=h_i=\ln r/\ln d!$. Thus, the corresponding R\'enyi complexity-entropy curve is a vertical straight line --- we can verify straightforwardly that the statistical complexity still depends on the parameter $\alpha$ in this case. This situation can arise from the analysis of a time series displaying a periodic behavior.

\section{Characterization of time series via R\'enyi complexity-entropy curves}
\label{sec:char-some-time}
In this section, we explore the R\'enyi complexity-entropy curves associated with several time series using the procedure described in the previous section. In our analysis, we consider time series obtained by numerical procedures (such as chaotic maps) and experimental measurements (such as fluctuations of crude oil prices).

\subsection{Fractional Brownian motions}
A stochastic process $(B^{\cal H}_t)_{t\ge 0}$, characterized by a parameter ${\cal H}\in (0,1)$, is called a fractional Brownian motion~\cite{Mandelbrot1968,Shevchenko2015} if it is a Gaussian process with null expectation and covariance function
\begin{equation}
  \E(B^{\cal H}_tB^{\cal H}_s)=\frac{1}{2}(t^{2{\cal H}}+s^{2{\cal H}}-|t-s|^{2{\cal H}})\,.
\end{equation}
The parameter ${\cal H}$ is usually called the Hurst parameter or the Hurst exponent. If ${\cal H}=1/2$, the increments $B^{\cal H}_{t_2}-B^{\cal H}_{s_2}$ and
$B^{\cal H}_{t_1}-B^{\cal H}_{s_1}$, with $s_1<t_1<s_2<t_2$, are independent and the usual Brownian motion is recovered. If ${\cal H}>1/2$ (${\cal H}<1/2$), these increments are positively (negatively) correlated. This means, roughly speaking, that positive increments in the past are likely to generate positive (negative) increments in the future and vice versa. Moreover, if $1/2<{\cal H}<1$, the fractional Brownian motion exhibits long-range correlations, in the sense that~\cite{Wijeratne2015}
\begin{equation}
  \sum_{n=1}^\infty|\E((B^{\cal H}_{n+1}-B^{\cal H}_n)(B^{\cal H}_1-B^{\cal H}_0))|=\infty\,.
\end{equation}

We numerically generate time series of length $2^{17}$ from fractional Brownian motions with different Hurst exponents using Hosking's procedure~\cite{Hosking}. Figure~\ref{fig:fbm} shows the R\'enyi complexity-entropy curves related to fractional Brownian motions with Hurst exponents ${\cal H}\in\{0.1,\ldots,0.8\}$. In the left panels, an embedding dimension $d=3$ has been considered whereas $d=4$ has been used for the right panels. Each colored curve in Fig.~\ref{fig:fbm} represents the mean R\'enyi complexity-entropy curve over $100$ realizations associated with a given Hurst parameter, and the shaded areas are obtained by considering two standard deviations in both the $H_\alpha$ and $C_\alpha$ values. The dashed lines were obtained using the analytical expressions for the probabilities of the permutations when $d=3$ and $d=4$, obtained by Bandt and Shiha~\cite{BandtShiha2007} (see also Appendix B of Ref.~\cite{RibeiroJaureguiZuninoLenzi2017}).

\begin{figure}[!ht]
  \centering
  \includegraphics[width=0.8\textwidth,keepaspectratio]{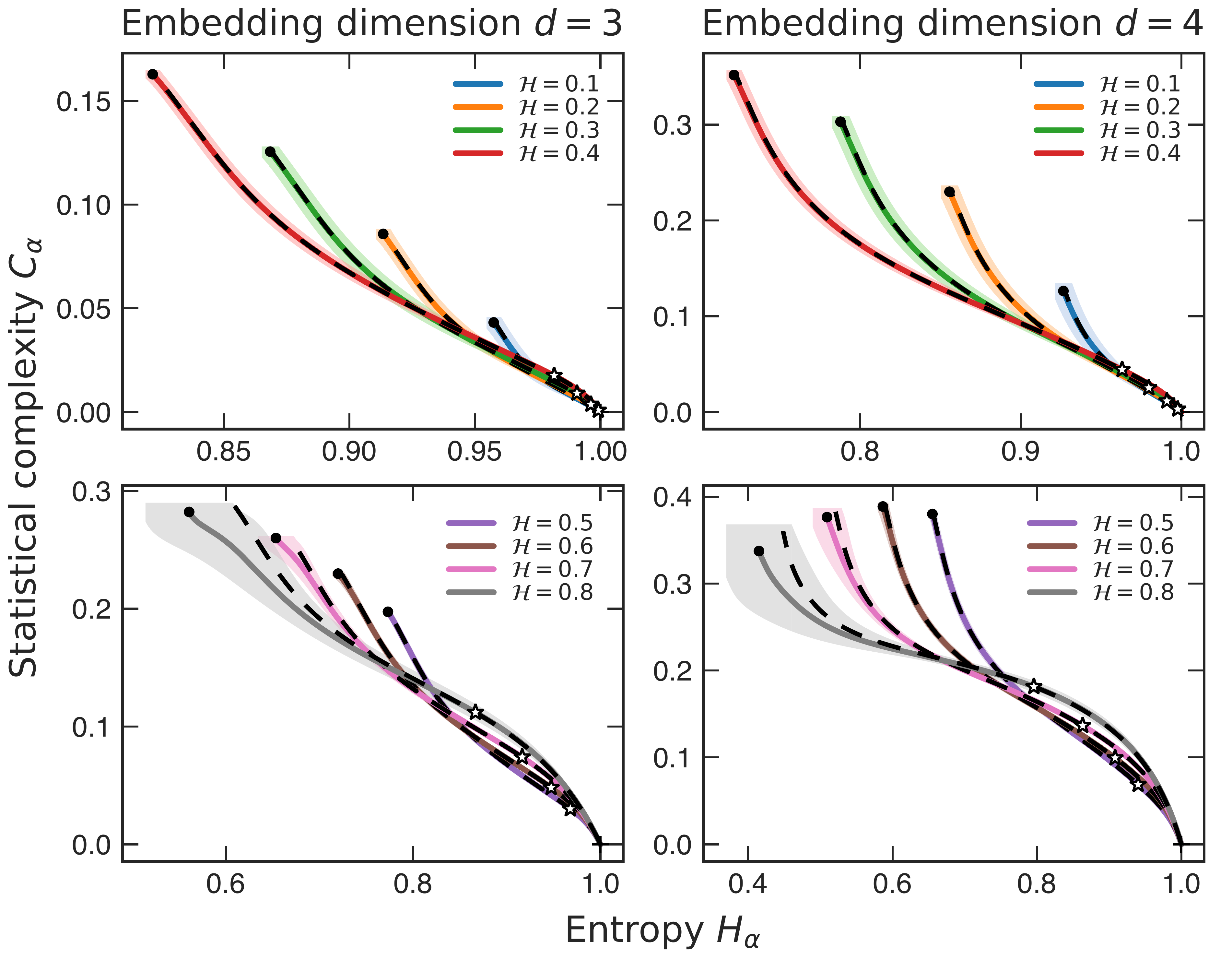}
  \caption{The R\'enyi complexity-entropy curves related to fractional Brownian motions with Hurst exponents $\mathcal{H}\in\{0.1,\ldots,0.8\}$. We have considered the embedding dimensions $d=3$ for the left panels, and $d=4$ for the right panels. The colored lines represent the mean R\'enyi complexity-entropy curves over $100$ realizations, and the shaded areas are obtained by considering two standard deviations in both the $H_\alpha$ and $C_\alpha$ values. The dashed lines are the R\'enyi complexity-entropy curves obtained analytically using the exact probabilities given in Ref.~\cite{BandtShiha2007}. The markers~$+$ and~$\circ$ indicate the beginning and the end of each curve respectively. The~\ding{73} markers indicate the ordered pair associated with the usual permutation entropy and statistical complexity values obtained when $\alpha$ tends to~$1$.}
  \label{fig:fbm}
\end{figure}

We note immediately from Fig.~\ref{fig:fbm} that all R\'enyi complexity-entropy curves start at the point $(1,0)$, indicating that all permutations occur for embedding dimensions $d=3$ and $d=4$. We further observe a good agreement between the numerically obtained curves and the exact results. Another common characteristic of these curves is that they have a positive curvature in a neighborhood of the starting point $(1,0)$. In other words, the derivative $dC_\alpha/dH_\alpha$ is a monotonic increasing function of $\alpha$ in a neighborhood of $\alpha=0$, as shown in Fig.~\ref{fig:fbmD}. Putting $H_\infty=\lim_{\alpha\to\infty}H_\alpha$, we observe from Fig.~\ref{fig:fbm} that $H_\infty$ is a monotonically decreasing function of the Hurst parameter. Moreover, the points associated with these values (indicated by $\circ$ markers) are more distant from each other than the ones related to the usual Shannon entropy (indicated by \ding{73} markers). Thus, these points associated with $H_\infty$ enable a better differentiation of time series related to fractional Brownian motions with different Hurst exponents, as it was also discussed in Ref.~\cite{zunino2015permutation}.

\begin{figure}[!ht]
  \centering \includegraphics[width=0.8\textwidth,
  keepaspectratio]{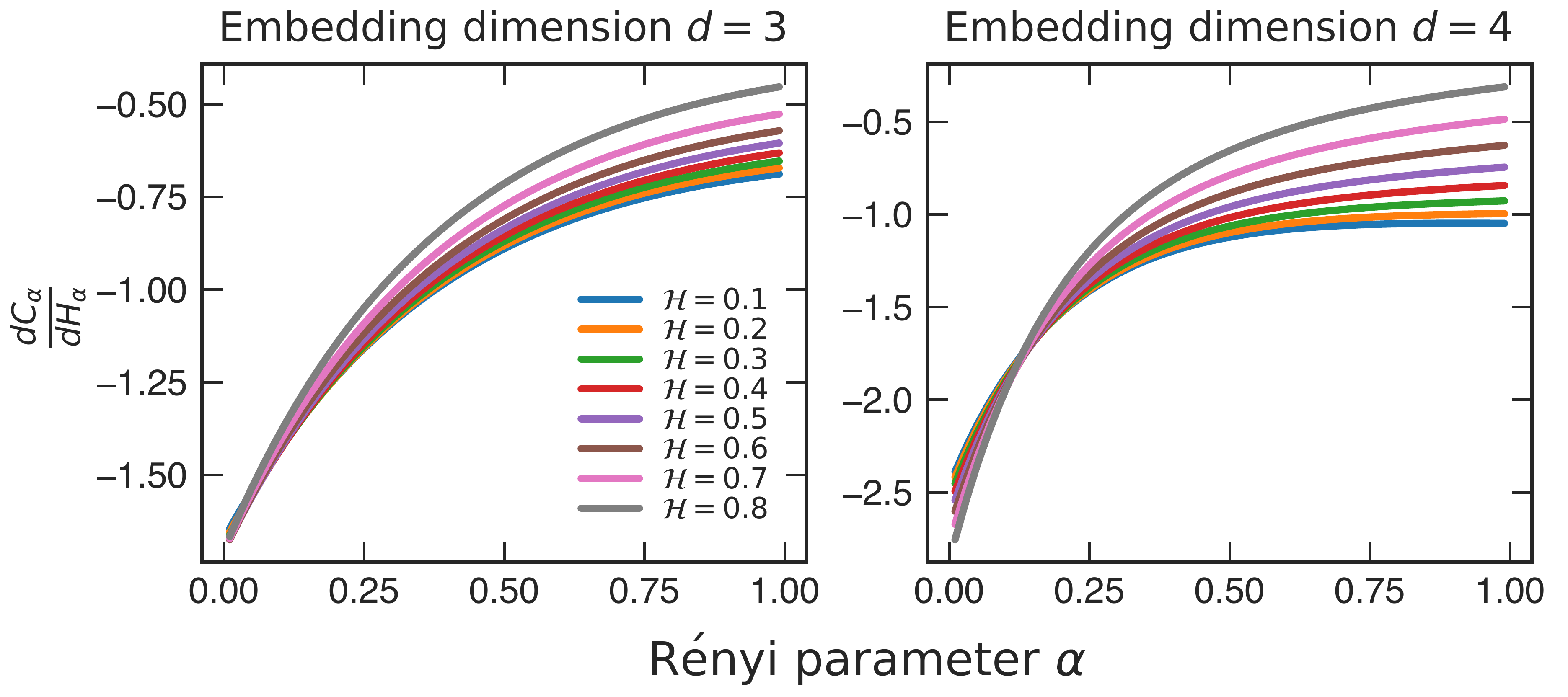}
  \caption{Representation of the derivative of the statistical complexity $C_\alpha$ with respect to the entropy $H_\alpha$ as a function of the R\'enyi parameter $\alpha$ for the fractional Brownian motions with Hurst parameter ${\cal H}\in\{0.1,\ldots,0.8\}$, the same considered in Fig.~\ref{fig:fbm}.}
  \label{fig:fbmD}
\end{figure}

\subsection{Chaotic maps at fully developed chaos}
In addition to time series obtained from stochastic processes, we also investigate time series associated with chaotic phenomena. In particular, we construct time series of length $10^4+2^{17}$ by the iteration of eight different chaotic maps, namely: Burgers, cubic, Gingerbreadman, Henon, logistic, Ricker, sine, and Tinkerbell maps, at fully developed chaos. Details about each map are provided in Appendix C of Ref.~\cite{RibeiroJaureguiZuninoLenzi2017}. For the two-dimensional chaotic maps of Burgers, Gingerbreadman, Henon and Tinkerbell maps, we have considered the time series of the square of the sum of the two components. To avoid any possible transient behavior, we have removed the first $10^4$ iterations in all simulations.

\begin{figure}[!ht]
  \centering \includegraphics[width=0.8\textwidth,
  keepaspectratio]{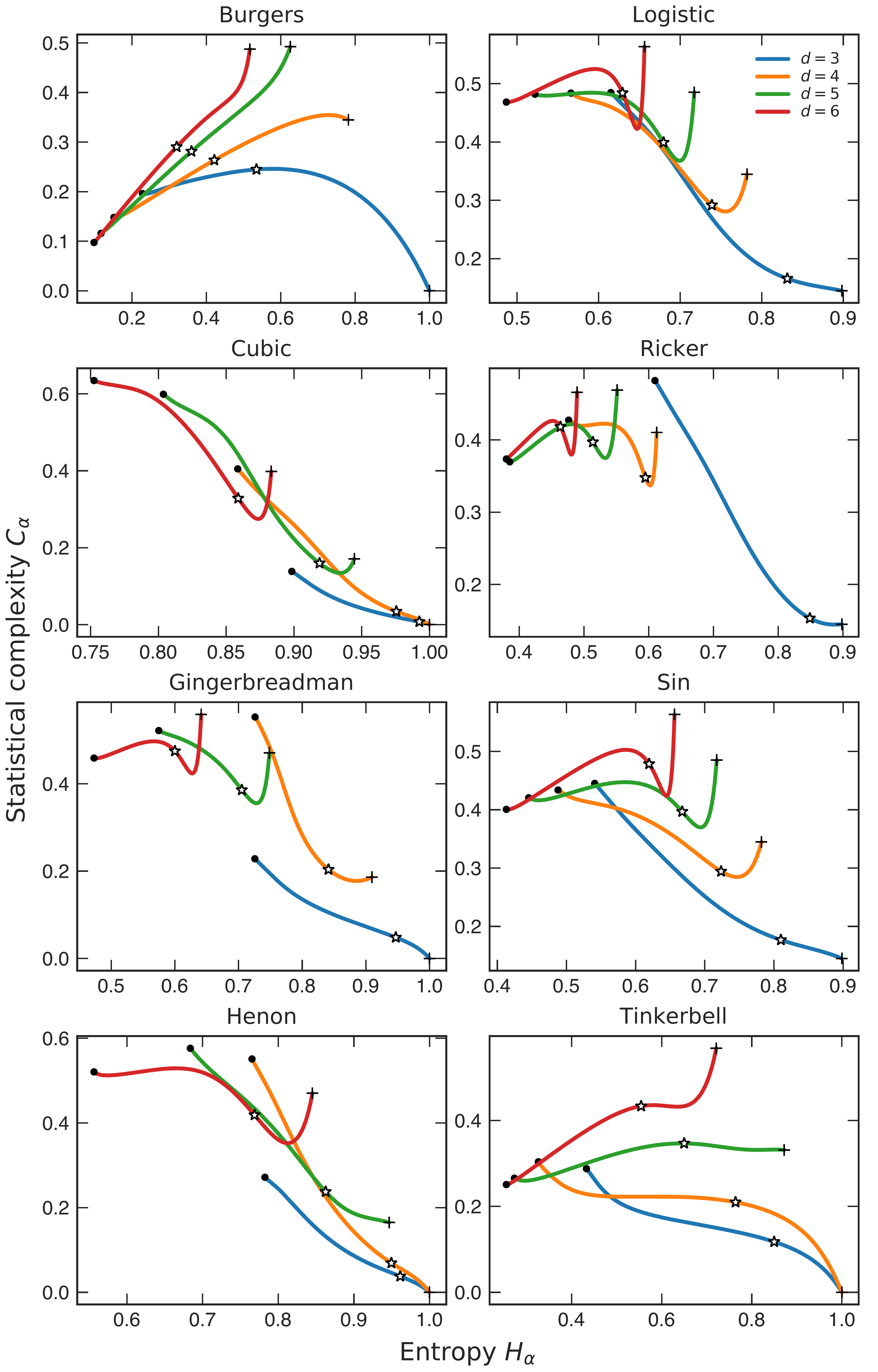}
  \caption{R\'enyi complexity-entropy curves of various chaotic maps at fully developed chaos. The embedding dimensions $d \in \{3,4,5,6\}$ were considered for each chaotic map. The markers~$+$ and~$\circ$ indicate the beginning and the end of each curve respectively. The~\ding{73} markers indicate the ordered pair associated with the usual permutation entropy and statistical complexity obtained when $\alpha$ tends to~$1$.}
  \label{fig:maps}
\end{figure}

Figure~\ref{fig:maps} shows the R\'enyi complexity-entropy curves associated with the eight chaotic maps mentioned in the previous paragraph. For each map, we have considered the embedding dimensions $d \in \{3,4,5,6\}$. In contrast with the curves associated with fractional Brownian motions, the R\'enyi complexity-entropy curves related to chaotic maps begin at points different from $(1,0)$, at least for embedding dimensions $d>4$. This feature indicates that there are permutations that do not occur. We further note that, at least for $d=6$, the R\'enyi complexity-entropy curves have negative curvature in a neighborhood of their initial point, \textit{i.e.}, the derivative $dC_\alpha/dH_\alpha$ is a decreasing function of the R\'enyi parameter $\alpha$ in the neighborhood of $\alpha=0$, as shown in Fig.~\ref{fig:mapsD}.

\begin{figure}[!ht]
  \centering \includegraphics[width=0.8\textwidth,
  keepaspectratio]{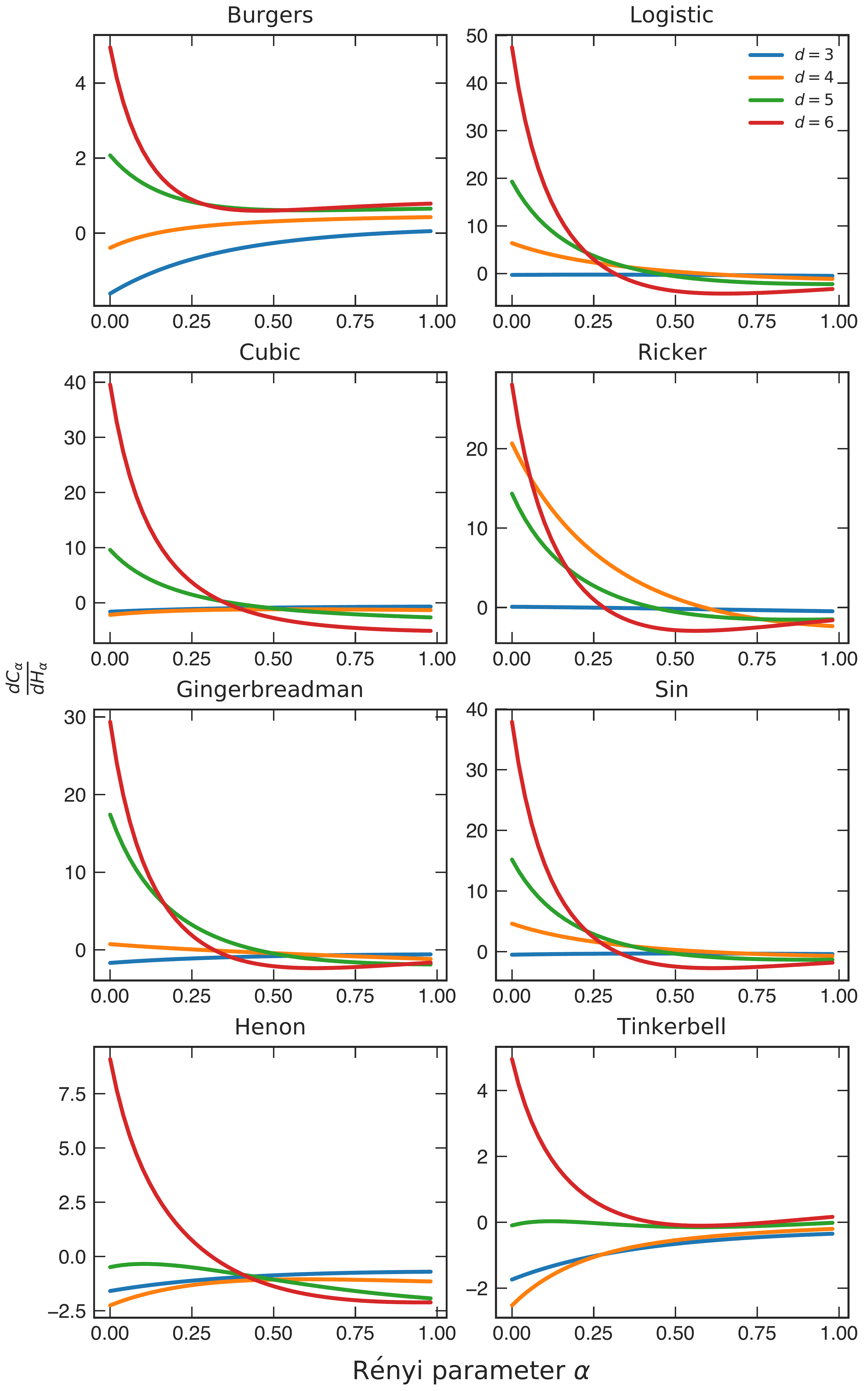}
  \caption{Representation of the derivative of the statistical complexity $C_\alpha$ with respect to the entropy $H_\alpha$ as a function of the R\'enyi parameter $\alpha$ for the chaotic maps considered in Fig.~\ref{fig:maps}.}
  \label{fig:mapsD}
\end{figure}
\clearpage
\subsection{The logistic map}
Among the chaotic maps considered in the previous subsection, the logistic map is undoubtedly one of the most famous. This map is defined by the recurrence formula
\begin{equation}
  y_{k+1}=ay_k(1-y_k)\,,
\end{equation}
where $a$ is a real parameter, whose values of interest are in the interval $[0,4]$. A modification of the value of $a$ changes the behavior of the logistic map. For instance, this map exhibits simple periodic behavior for $a = 3.05$, stable cycles of period $4$ for $a=3.5$ and of period $8$ for $a=3.55$, and chaos for most values of $a>3.569 945 67\ldots$ and for $a=4$ (fully developed chaos).

Figure~\ref{fig:logistic}a shows the R\'enyi complexity-entropy curves associated with the logistic map for $a\in\{3.05,3.5,3.55,3.593,4\}$ and embedding dimension $d=4$. We note that all curves begin at points different from $(1,0)$, indicating the lack of some permutations. Moreover, the R\'enyi complexity-entropy curves for $a=3.593$ and $a=4$, which correspond to two chaotic regimes of the logistic map, show negative curvature in a neighborhood of their starting points. On the other hand, for the other values of $a$, for which the logistic map has a periodic behavior, the R\'enyi complexity-entropy curves are almost vertical lines. This indicates that, for each $a\in\{3.05,3.5,3.55\}$, the permutations that actually occur have the same probability. This can be verified directly using the representation of the corresponding time series. Moreover, the R\'enyi complexity-entropy curves for $a=3.5$ and $a=3.55$ coincide, suggesting that both time series have the same probability distribution of the permutations. In fact, a detailed inspection of these time series reveals that the permutations that actually occur are the same for both of them. Another curious fact from Fig.~\ref{fig:logistic}a is that the point associated with the usual permutation entropy and statistical complexity for $a=3.55$ and $a=3.593$ almost coincide even when these two values correspond to completely different regimes of the logistic map.

\begin{figure}[!ht]
  \centering
  \includegraphics[width=0.8\textwidth,keepaspectratio]{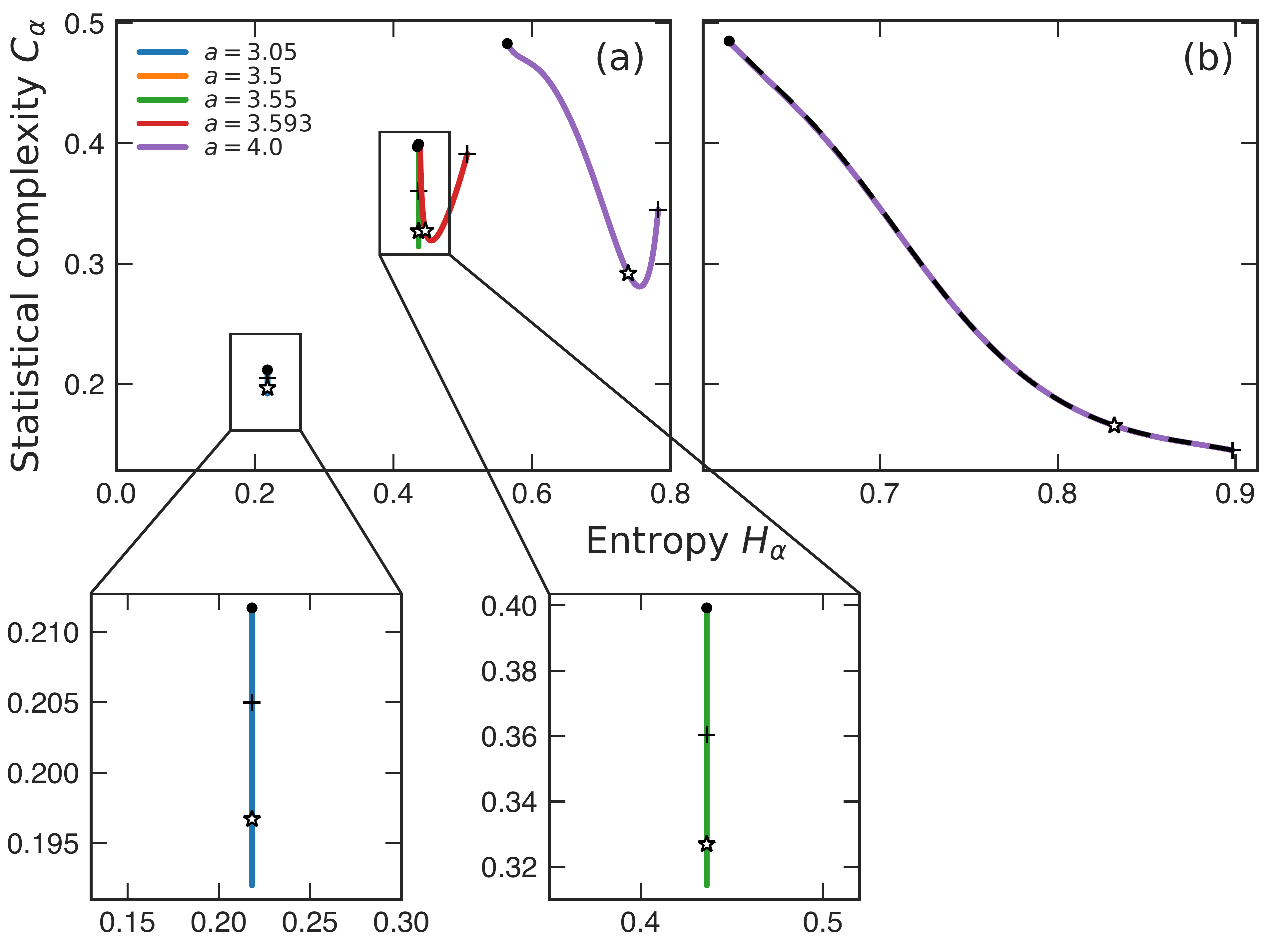}
  \caption{(a) The R\'enyi complexity-entropy curves associated with the logistic map for typical values of the parameter $a$ and embedding dimension $d=4$. The $+$ and $\circ$ markers represent the beginning and end of the curves respectively. The \ding{73} marker identifies the point associated with the usual permutation entropy and statistical complexity, obtained when $\alpha$ tends to $1$. (b) The R\'enyi complexity-entropy curve associated with the logistic map for $a=4$ and $d=3$. The dotted curve was analytically obtained using the exact values of the probabilities of the permutations. The meanings of the $+$, $\circ$ and \ding{73} markers are the same as in panel (a).}
  \label{fig:logistic}
\end{figure}

The logistic map specially attracts our interest because the probabilities of the permutations obtained from its time series within the Bandt and Pompe approach can be found analytically when $a=4$ and $d=3$. In fact, Amig\'o~\textit{et al.}~\cite{Amigo2006,Amigo2007,Amigo2008,Amigo} have shown that the list $(y_{k},y_{k+1},y_{k+2})$ corresponds to the permutation $(0,1,2)$ if $0<y_k<1/4$. In the same way, the permutation $(0,2,1)$ occurs if $1/4<y_k<\frac{5-\sqrt{5}}{8}$, $(2,0,1)$ if $\frac{5-\sqrt{5}}{8}<y_k<3/4$, $(1,0,2)$ if $3/4<y_k<\frac{5+\sqrt{5}}{8}$, $(1,2,0)$ if $\frac{5-\sqrt{5}}{8}<y_k<1$, and no list $(y_{k},y_{k+1},y_{k+2})$ corresponds to the permutation $(1,0,2)$. Combining this information with the fact that the logistic map with $a=4$ has an invariant distribution concentrated in the interval $(0,1)$ with density $\pi^{-1}y^{-1/2}(1-y)^{-1/2}$~\cite{jakobson1981absolutely}, we obtain the probabilities of the permutations by integrating this density in each interval for which each permutation occurs. Thus, we obtain the distribution $p=(1/3,1/15,4/15,2/15,1/5,0)$ (following the order of appearance of the permutations), from which we analytically compute $H_\alpha(p)$ and $C_\alpha(p)$. Figure~\ref{fig:logistic}b shows a comparison between the numerical and the analytical R\'enyi complexity-entropy curves associated with the logistic map for $a=4$ and embedding dimension $d=3$, where we note that both curves practically coincide.

\subsection{Empirical data}
The analyses performed in the previous subsections suggest that the R\'enyi complexity-entropy curves enable the distinction of time series associated with stochastic processes, chaotic phenomena, and periodic behaviors. More precisely, a positive curvature of this curve in a neighborhood of its starting point indicates that the associated time series is of a stochastic nature, whereas a negative curvature, at least for large embedding dimensions, expresses that the time series may be of a chaotic nature. Also, a vertical R\'enyi complexity-entropy curve indicates that the time series has a periodic behavior. To further explore these affirmations, we now consider time series obtained by experimental measurements. In particular, we consider the time series generated from the intensity pulsations of a laser~\cite{huebner1989dimensions}, which is of a chaotic nature, and the one associated with the fluctuations of the crude oil price, which is of a stochastic nature.

The time series associated with the chaotic intensity pulsations of a laser has $9093$ terms and is freely available on the Internet~\cite{tspred}. The time series related to the crude oil prices refers to the daily closing spot price of the West Texas Intermediate from January 2, 1986, to July 10, 2012. This time series has $7788$ terms and can also be obtained freely on the Internet~\cite{eia}. The R\'enyi complexity-entropy curves for both time series and their derivatives as functions of the R\'enyi parameter $\alpha$ are represented in Fig.~\ref{fig:expdata} for embedding dimensions $d\in\{3,4,5,6\}$. We note that the R\'enyi complexity-entropy curves associated with the intensity pulsations of a laser start at points different from $(1,0)$ for embedding dimensions $d>3$, indicating the lack of some permutations. Moreover, the curvature of these curves in a neighborhood of their initial point is negative for embedding dimensions $d\ge 5$, in agreement with the chaotic nature of the corresponding time series. The R\'enyi complexity-entropies associated with the fluctuations of the crude oil price begin at the point $(1,0)$, except for the embedding dimension $d=6$. A possible justification for this fact is that the length of the series is not great enough --- note that $6!=720$, which is not too much less than $7788$. On the other hand, the curvature of these curves in a neighborhood of their starting points is positive, in agreement with the stochastic nature of the corresponding time series.


\begin{figure}[!ht]
  \centering \includegraphics[width=0.9\textwidth,keepaspectratio]{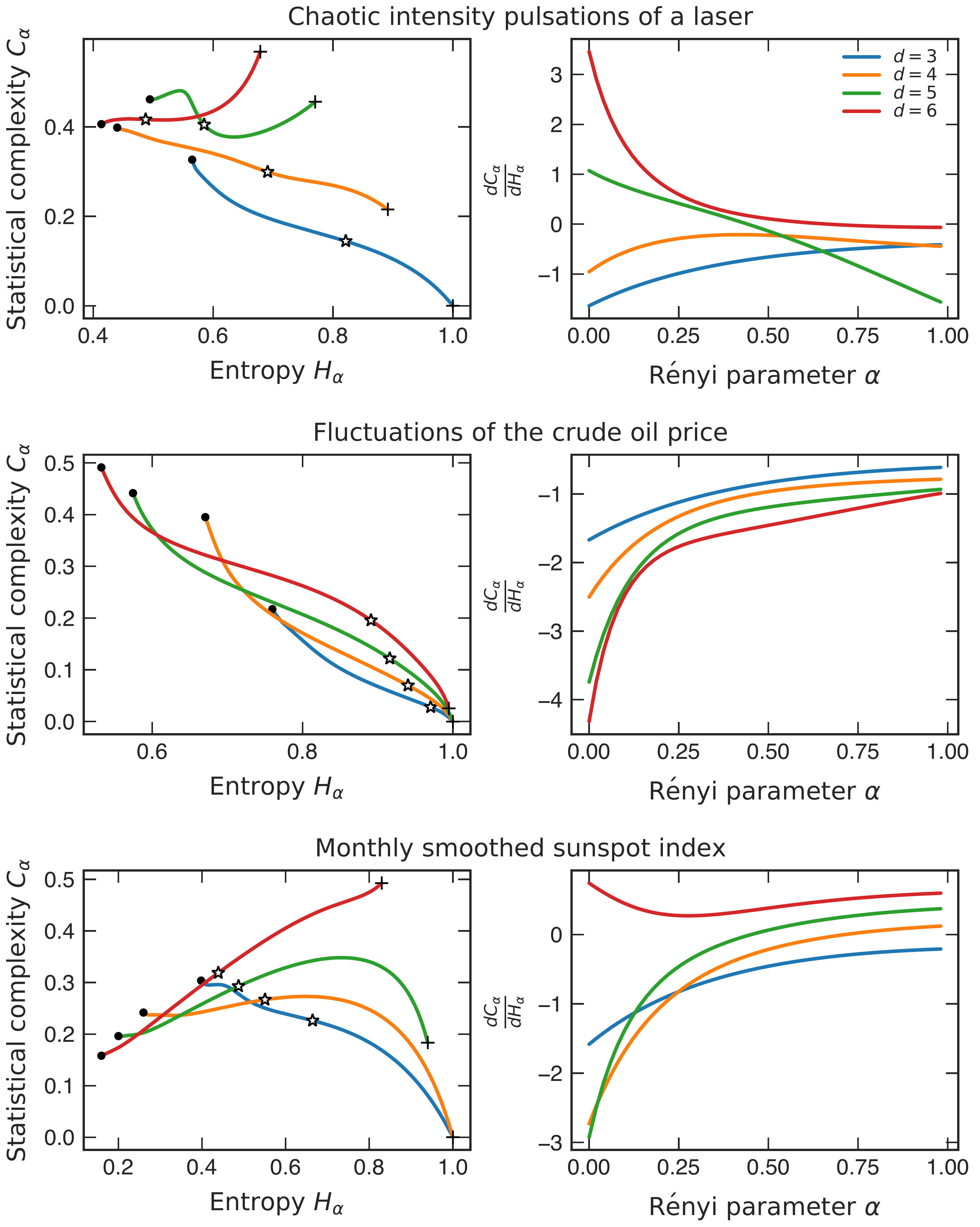}
  \caption{(Left panels) R\'enyi complexity-entropy curves for time series associated with the chaotic intensity pulsations of a laser, the fluctuations of the crude oil price, and the monthly smoothed sunspot index. The embedding dimensions $d \in \{3,4,5,6\}$ were considered for each time series. The $+$ and $\circ$ markers denote the beginning and the end of each curve. The \ding{73} markers represent the ordered pairs corresponding to the usual permutation entropy and statistical complexity, obtained when $\alpha$ tends to $1$. (Right panels) Representation of the derivative of $C_\alpha$ with respect to $H_\alpha$ as a function of the R\'enyi parameter $\alpha$ for the time series and embedding dimensions considered in the left panels.}
  \label{fig:expdata}
\end{figure}

In addition to time series with a well-defined nature, we also analyze the time series of the monthly smoothed sunspots index, for which there is still no consensus about its stochastic or chaotic nature~\cite{carbonell1993asymmetry,paluvs1999sunspot,timmer2000can,paluvs2000paluvs,mininni2000stochastic,mininni2002study,de2010fast}. This time series is freely available on the Internet~\cite{silso} and was generated by analyzing the $13$-month smoothed monthly sunspot index from 1974 to 2016, yielding a time series with $3202$ terms. The R\'enyi complexity-entropy-curves associated with this time series and their derivatives as functions of the R\'enyi parameter $\alpha$ for embedding dimensions $d\in\{3,4,5,6\}$ are represented in Fig.~\ref{fig:expdata}. We note that the R\'enyi complexity-entropy curves start at the point $(1,0)$, except for the embedding dimensions $d=5$ and $d=6$. Moreover, the curvature of these curves are positive for embedding dimensions $d\in\{3,4,5\}$ and is negative for $d=6$. For these reasons, we are not in a good position to state anything about the nature of the corresponding time series mainly because of its small length.

\section{Conclusions}
\label{sec:conclusions}
We have considered a generalized definition of the statistical complexity based on a monoparametric generalization of the Shannon entropy, namely the R\'enyi entropy. We have used this generalized statistical complexity $C_\alpha$ in combination with the normalized R\'enyi entropy $H_\alpha$ to construct a parametric curve $C_\alpha$ versus $H_\alpha$, taking the R\'enyi parameter $\alpha>0$ as the parameter of the curve. From this approach, we have analyzed several time series obtained from numerical simulations and experimental measurements. Our study has revealed that the use of R\'enyi complexity-entropy curves enables the differentiation of time series of chaotic and stochastic nature. More precisely, time series of stochastic nature are associated with R\'enyi complexity-entropy curves that have positive curvature in a neighborhood of their initial points. On the other hand, the curves related to chaotic phenomena have negative curvature in a neighborhood of their initial points, at least for large embedding dimensions.

The use of R\'enyi complexity-entropy curves enhances the detection of time series that have periodic behavior since for them the R\'enyi complexity-entropy curves are vertical lines. This feature is absent in the framework of $q$-complexity-entropy curves \cite{RibeiroJaureguiZuninoLenzi2017}, which are based on the Tsallis entropy. Thus, R\'enyi complexity-entropy curves may give complementary information about time series that are not properly taken into account by the $q$-complexity-entropy curves. 

\section*{Acknowledgments}
M.J. thanks the financial support of CNPq under Grant 150577/2017-6. H.V.R. acknowledges the financial support of CNPq under Grant 440650/2014-3. L.Z. acknowledges Consejo Nacional de Investigaciones Cient\'ificas y T\'ecnicas (CONICET), Argentina, for the financial support. E.K.L. thanks the financial support of the CNPq under Grant No. 303642/2014-9.

\appendix
\section{Limit values of \texorpdfstring{$H_\alpha$}{Halpha} and \texorpdfstring{$C_\alpha$}{Calpha}}
\label{sec:limit-values}
\begin{thm}
Given an arbitrary discrete probability distribution
$p=(p_1,\ldots,p_n)$, let $r$ be the number of
non-zero components of $p$. If $r=1$, then
\begin{enumerate}
\item[\upshape{(i)}] $H_\alpha(p)=0$ and $C_\alpha(p)=0$;
\end{enumerate}
if $r>1$ and $p_M,p_m\in[0,1]$ denote the maximum and minimum
components of $p$ respectively, then
\begin{enumerate}[wide, labelwidth=!, labelindent=0pt]
\item[\upshape{(ii)}]
$\lim_{\alpha\downarrow 0}H_\alpha(p)=\ln r/\ln n$;
\item[\upshape{(iii)}]
$\lim_{\alpha\to\infty}H_\alpha(p)=-\ln p_M/\ln n$;
\item[\upshape{(iv)}]
$\lim_{\alpha\downarrow 0}C_\alpha(p)=\frac{\ln r}{\ln n}\frac{\ln (n+r)-\ln 2n}{\ln (n+1)-\ln 2n}$;
\item[\upshape{(v)}]
$\lim_{\alpha\to\infty}C_\alpha(p)=\frac{\ln[(np_M+1)(np_m+1)]-\ln 4np_M}{\ln 4n-\ln(n+1)}\frac{\ln p_M}{\ln n}$.
\end{enumerate}
\end{thm}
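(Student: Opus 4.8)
The plan is to reduce all five claims to the asymptotic behaviour of the three sums appearing in $H_\alpha$, $D_\alpha$ and $D_\alpha^*$, handling them in increasing order of difficulty. Claim (i) is immediate: if $r=1$ then $\sum_i p_i^\alpha=1$, so $S_\alpha(p)=0$ and hence $H_\alpha(p)=0$; since $0\le D_\alpha(p)/D_\alpha^*\le 1$, the product $C_\alpha(p)=D_\alpha(p)H_\alpha(p)/D_\alpha^*$ vanishes as well. Claims (ii) and (iii) follow from elementary limits of $\sum_i p_i^\alpha$. As $\alpha\downarrow 0$ each non-zero $p_i^\alpha\to 1$ while each zero term stays zero, so $\sum_i p_i^\alpha\to r$ and $S_\alpha\to\ln r$, giving (ii). As $\alpha\to\infty$, factoring out $p_M^\alpha$ gives $\sum_i p_i^\alpha=p_M^\alpha(k_M+o(1))$ with $k_M$ the multiplicity of $p_M$, so $\ln\sum_i p_i^\alpha=\alpha\ln p_M+O(1)$, and the prefactor $1/(1-\alpha)\sim-1/\alpha$ yields $S_\alpha\to-\ln p_M$, which is (iii). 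These are the standard order-zero (log of support size) and order-infinity (min-entropy) Rényi limits.

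The substance lies in (iv) and (v), where I would compute $\lim D_\alpha(p)/D_\alpha^*$ and multiply by the already-established limit of $H_\alpha$. Write $D_\alpha(p)=\frac{1}{2(\alpha-1)}[\ln A_\alpha+\ln B_\alpha]$ with $A_\alpha=\sum_i p_i^\alpha\bigl(\tfrac{p_i+1/n}{2}\bigr)^{1-\alpha}$ and $B_\alpha=\sum_i n^{-\alpha}\bigl(\tfrac{p_i+1/n}{2}\bigr)^{1-\alpha}$. For (iv) the prefactor $1/(2(\alpha-1))\to-\tfrac12$ and each sum converges: using $\sum_{p_i>0}p_i=1$, one gets $A_\alpha\to\sum_{p_i>0}\tfrac{p_i+1/n}{2}=\tfrac{n+r}{2n}$ and $B_\alpha\to\sum_i\tfrac{p_i+1/n}{2}=1$, while the argument of the logarithm in $D_\alpha^*$ tends to $2\cdot\tfrac{n+1}{4n}=\tfrac{n+1}{2n}$. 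Hence $D_0(p)=-\tfrac12\ln\tfrac{n+r}{2n}$ and $D_0^*=-\tfrac12\ln\tfrac{n+1}{2n}$, whose ratio times $\ln r/\ln n$ gives (iv) after the factors $-\tfrac12$ cancel.

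For (v) I would rewrite each summand as a constant times a power of $\alpha$: $p_i^\alpha\bigl(\tfrac{p_i+1/n}{2}\bigr)^{1-\alpha}=\tfrac{p_i+1/n}{2}\bigl(\tfrac{2np_i}{np_i+1}\bigr)^\alpha$ and $n^{-\alpha}\bigl(\tfrac{p_i+1/n}{2}\bigr)^{1-\alpha}=\tfrac{p_i+1/n}{2}\bigl(\tfrac{2}{np_i+1}\bigr)^\alpha$. The base $\tfrac{2np_i}{np_i+1}$ is increasing in $p_i$, so $A_\alpha$ is dominated by the $p_i=p_M$ term, while $\tfrac{2}{np_i+1}$ is decreasing in $p_i$, so $B_\alpha$ is dominated by the $p_i=p_m$ term. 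Extracting these leading exponentials gives $\ln A_\alpha+\ln B_\alpha=\alpha\ln\tfrac{4np_M}{(np_M+1)(np_m+1)}+O(1)$, so with $\tfrac{\alpha}{2(\alpha-1)}\to\tfrac12$ one obtains $D_\infty(p)=\tfrac12\ln\tfrac{4np_M}{(np_M+1)(np_m+1)}$. The same procedure applied to $D_\alpha^*$, where $(n+1)^{1-\alpha}\to 0$ and $\bigl(\tfrac{n+1}{4n}\bigr)^{1-\alpha}$ grows like $\bigl(\tfrac{4n}{n+1}\bigr)^\alpha$, yields $D_\infty^*=\tfrac12\ln\tfrac{4n}{n+1}$. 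Taking the ratio and multiplying by $H_\infty=-\ln p_M/\ln n$ reproduces (v); the two sign reversals (one from the ratio, one from $-\ln p_M$) combine to give exactly the stated ordering of the logarithms in the numerator.

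The main obstacle is the dominance bookkeeping in (v): one must justify that the claimed summand controls each sum as $\alpha\to\infty$, which requires that its base strictly exceed the others and, in particular, exceed $1$ so that the sum actually grows. This reduces to the inequalities $p_M\ge 1/n$ and $p_m\le 1/n$ (the maximum is at least, and the minimum at most, the mean $1/n$), with equality only in the uniform case $p_M=p_m=1/n$; in that degenerate case both numerators collapse to $\ln 1=0$ and the formula correctly returns $C_\infty=0$. The case $p_m=0$, which arises whenever $r<n$, is handled automatically, since the base $\tfrac{2}{np_m+1}$ is then simply $2$; likewise the restriction $n\ge r>1$ guarantees $(n-1)^2>0$, which is precisely what makes the dominant term of $D_\alpha^*$ the one identified above.
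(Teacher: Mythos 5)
Your proposal is correct and follows essentially the same route as the paper: (i)--(iii) via the standard R\'enyi limits, (iv) by evaluating $\lim_{\alpha\downarrow 0}D_\alpha(p)$ and $\lim_{\alpha\downarrow 0}D_\alpha^*$ directly, and (v) by isolating the dominant exponential term in each sum, which is exactly the content of the paper's upper/lower (squeeze) bounds on $D_\alpha(p)/D_\alpha^*$, followed by multiplication with the limit of $H_\alpha$. The only cosmetic differences are that you compute $\lim D_\alpha$ and $\lim D_\alpha^*$ separately and then take the ratio where the paper sandwiches the ratio as a whole, and that your stated ``requirement'' that the dominant base strictly exceed the others and exceed $1$ is not actually needed (for positive weights $c_i$, $\ln\sum_i c_i b_i^\alpha = \alpha\ln\max_i b_i + O(1)$ holds in general), though your resolution via $p_m \le 1/n \le p_M$ is fine.
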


\begin{proof}
Items (i) and (ii) follow directly from the definitions of $H_\alpha$ and $C_\alpha$.
\begin{enumerate}[wide, labelwidth=!, labelindent=0pt]
\item[(iii)] We have $p_M^\alpha\le \sum_{i=1}^np_i^\alpha\le rp_M^\alpha$ and, consequently, $\ln p_M^\alpha\le \ln(\sum_{i=1}^np_i^\alpha)\le\ln(rp_M^\alpha)$. Hence, for $\alpha>1$,
\begin{multline}
\frac{\alpha}{1-\alpha}\frac{\ln p_M}{\ln n}\ge\frac{1}{(1-\alpha)\ln n}\ln\sum_{i=1}^np_i^\alpha\\
\ge \frac{\alpha}{1-\alpha}\frac{\ln p_M}{\ln n}+\frac{\ln r}{(1-\alpha)\ln n}\,.
\end{multline}
Letting $\alpha$ increase without bound, we obtain $H_\alpha(p)\to-\ln p_M/\ln n$.

\item[(iv)] We verify immediately that
\begin{equation}
\lim_{\alpha\downarrow 0}D_\alpha^*=-\frac{1}{2}\ln\frac{n+1}{2n}
\end{equation}
and
\begin{equation}
\lim_{\alpha\downarrow 0} D_\alpha(p)=-\frac{1}{2}\ln\dpar{\frac{1}{2}+\frac{r}{n}}\,.
\end{equation}
Hence,
\begin{equation}
\lim_{\alpha\downarrow 0}C_\alpha(p)=\frac{\ln r}{\ln n}\frac{\ln(n+r)-\ln 2n}{\ln (n+1)-\ln 2n}\,.
\end{equation}
\item[(v)] We have
\begin{equation}
 \label{eq:DD}
\frac{D_\alpha(p)}{D_\alpha^*}=\frac{\ln[\sum_{i=1}^np_i(\frac{p_i+1/n}{2p_i})^{1-\alpha}\sum_{j=1}^n\frac{1}{n}(\frac{p_j+1/n}{2/n})^{1-\alpha}]}{(\alpha-1)\ln\frac{4n}{n+1}+\ln\frac{(n+1)^{1-\alpha}+n-1}{n}}\,.
\end{equation}
To find the limit of Eq. (\ref{eq:DD}) as $\alpha\to\infty$, we start by obtaining convenient upper and lower bounds of $D_\alpha(p)/D_\alpha^*$. With this in mind, we note that
\begin{equation}
\frac{p_i+1/n}{p_i}=1+\frac{1}{np_i}\ge 1+\frac{1}{np_M}
\end{equation}
and, consequently, for $\alpha>1$,
\begin{equation}
\label{eq:DD.up1}
\sum_{i=1}^np_i\dpar{\frac{p_i+1/n}{2p_i}}^{1-\alpha}\le \dpar{\frac{p_M+1/n}{2p_M}}^{1-\alpha}\,.
\end{equation}
On the other hand, using the fact that
\begin{equation}
\frac{p_i+1/n}{1/n}=np_i+1\ge np_m+1\,,
\end{equation}
we obtain that
\begin{equation}
\label{eq:DD.up2}
\sum_{i=1}^n\frac{1}{n}\dpar{\frac{p_i+1/n}{2/n}}^{1-\alpha}\le \dpar{\frac{np_m+1}{2}}^{1-\alpha}\,.
\end{equation}
Since the denominator in Eq.~(\ref{eq:DD}) is positive for sufficiently large $\alpha$, using Eqs.~(\ref{eq:DD.up1}) and~(\ref{eq:DD.up2}) in Eq.~(\ref{eq:DD}), we have
\begin{equation}
\label{eq:DD.ub}
\frac{D_\alpha(p)}{D_\alpha^*}\le \frac{(\alpha-1)\ln\frac{4np_M}{(np_M+1)(np_m+1)}}{(\alpha-1)\ln\frac{4n}{n+1}+\ln\frac{(n+1)^{1-\alpha}+n-1}{n}}
\end{equation}
for sufficiently large $\alpha$. To obtain a lower bound, we note immediately that
\begin{equation}
p_M\dpar{\frac{p_M+1/n}{2p_M}}^{1-\alpha}\le \sum_{i=1}^np_i\dpar{\frac{p_i+1/n}{2p_i}}^{1-\alpha}
\end{equation}
and that
\begin{equation}
\frac{1}{n}\dpar{\frac{p_m+1/n}{2/n}}^{1-\alpha}\le \sum_{i=1}^n\frac{1}{n}\dpar{\frac{p_i+1/n}{2/n}}^{1-\alpha}\,.
\end{equation}
Hence,
\begin{equation}
\label{eq:DD.lb}
\frac{D_\alpha(p)}{D_\alpha^*}\ge \frac{(\alpha-1)\ln\frac{4np_M}{(np_M+1)(np_m+1)}+\ln\frac{p_M}{n}}{(\alpha-1)\ln\frac{4n}{n+1}+\ln\frac{(n+1)^{1-\alpha}+n-1}{n}}
\end{equation}
for sufficiently large values of $\alpha$. Then, it follows from Eqs.~(\ref{eq:DD.ub}) and~(\ref{eq:DD.lb}) that
\begin{equation}
\lim_{\alpha\to\infty}\frac{D_\alpha(p)}{D_\alpha^*}=\frac{\ln 4np_M-\ln[(np_M+1)(np_m+1)]}{\ln 4n-\ln(n+1)}\,.
\end{equation}
The product of this result with the one obtained in item~(iii) is equal to the limit of $C_\alpha(p)$ as $\alpha$ increases without bound.\qedhere
\end{enumerate}
\end{proof}

\bibliographystyle{elsarticle-num}
\bibliography{renyi-entropy-plane.bib}

\begin{thebibliography}{10}
\expandafter\ifx\csname url\endcsname\relax
  \def\url#1{\texttt{#1}}\fi
\expandafter\ifx\csname urlprefix\endcsname\relax\def\urlprefix{URL }\fi
\expandafter\ifx\csname href\endcsname\relax
  \def\href#1#2{#2} \def\path#1{#1}\fi

\bibitem{Shannon1948}
C.~E. Shannon, A mathematical theory of communication, Bell Syst.Tech. J. 27
  (1948) 379.
\newblock \href {http://dx.doi.org/10.1002/j.1538-7305.1948.tb01338.x}
  {\path{doi:10.1002/j.1538-7305.1948.tb01338.x}}.

\bibitem{KullbackLeibler1951}
S.~Kullback, R.~A. Leibler, On information and sufficiency, Ann. Math. Statist.
  22 (1951) 79.
\newblock \href {http://dx.doi.org/10.1214/aoms/1177729694}
  {\path{doi:10.1214/aoms/1177729694}}.

\bibitem{Kolmogorov1965}
A.~N. Kolmogorov, Three approaches to the quantitative definition of
  information, Probl. Inf. Transm. 1 (1965) 3.
\newblock \href {http://dx.doi.org/10.1080/00207166808803030}
  {\path{doi:10.1080/00207166808803030}}.

\bibitem{Mandelbrot}
B.~B. Mandelbrot, The fractal geometry of nature, Freeman, San Francisco, 1982.

\bibitem{Lyapunov}
A.~M. Lyapunov, The general problem of the stability of motion, Taylor-Francis,
  London, 1992.

\bibitem{perc2005nonlinear}
M.~Perc, Nonlinear time series analysis of the human electrocardiogram,
  European Journal of Physics 26~(5) (2005) 757.
\newblock \href {http://dx.doi.org/10.1088/0143-0807/26/5/008}
  {\path{doi:10.1088/0143-0807/26/5/008}}.

\bibitem{BandtPompe2002}
C.~Bandt, B.~Pompe, Permutation entropy: a natural complexity measure for time
  series, Phys. Rev. Lett. 88 (2002) 174102.
\newblock \href {http://dx.doi.org/10.1103/PhysRevLett.88.174102}
  {\path{doi:10.1103/PhysRevLett.88.174102}}.

\bibitem{Bian_etal2012}
C.~Bian, C.~Qin, Q.~D.~Y. Ma, Q.~Shen, Modified permutation-entropy analysis of
  heartbeat dynamics, Phys. Rev. E 85 (2012) 021906.
\newblock \href {http://dx.doi.org/10.1103/PhysRevE.85.021906}
  {\path{doi:10.1103/PhysRevE.85.021906}}.

\bibitem{ribeiro2012complexity}
H.~V. Ribeiro, L.~Zunino, R.~S. Mendes, E.~K. Lenzi, Complexity--entropy
  causality plane: A useful approach for distinguishing songs, Physica A
  391~(7) (2012) 2421--2428.
\newblock \href {http://dx.doi.org/10.1016/j.physa.2011.12.009}
  {\path{doi:10.1016/j.physa.2011.12.009}}.

\bibitem{Ribeiro_etal2012}
H.~V. Ribeiro, L.~Zunino, E.~K. Lenzi, P.~A. Santoro, R.~S. Mendes,
  Complexity-entropy causality plane as a complexity measure for
  two-dimensional patterns, PLoS ONE 7 (2012) e40689.
\newblock \href {http://dx.doi.org/10.1371/journal.pone.0040689}
  {\path{doi:10.1371/journal.pone.0040689}}.

\bibitem{Aragoneses_etal2013}
A.~Aragoneses, N.~Rubido, J.~Tiana-Alsina, M.~C. Torrent, C.~Masoller,
  Distinguishing signatures of determinism and stochasticity in spiking complex
  systems, Sci. Rep. 3 (2013) 1778.
\newblock \href {http://dx.doi.org/10.1038/srep01778}
  {\path{doi:10.1038/srep01778}}.

\bibitem{LiZuntao2014}
Q.~Li, F.~Zuntao, Permutation entropy and statistical complexity quantifier of
  nonstationarity effect in the vertical velocity records, Phys. Rev. E 89
  (2014) 012905.
\newblock \href {http://dx.doi.org/10.1103/PhysRevE.89.012905}
  {\path{doi:10.1103/PhysRevE.89.012905}}.

\bibitem{Weck_etal2015}
P.~J. Weck, D.~A. Schaffner, M.~R. Brown, R.~T. Wicks, Permutation entropy and
  statistical complexity analysis of turbulence in laboratory plasmas and the
  solar wind, Phys. Rev. E 91 (2015) 023101.
\newblock \href {http://dx.doi.org/10.1103/PhysRevE.91.023101}
  {\path{doi:10.1103/PhysRevE.91.023101}}.

\bibitem{Yang_etal2015}
Y.-G. Yang, Q.-X. Pan, S.-J. Sun, P.~Xu, Novel image encryption based on
  quantum walks, Sci. Rep. 5 (2015) 7784.
\newblock \href {http://dx.doi.org/10.1038/srep07784}
  {\path{doi:10.1038/srep07784}}.

\bibitem{Aragoneses_etal2016}
A.~Aragoneses, L.~Carpi, N.~Tarasov, D.~V. Churkin, M.~C. Torrent, C.~Masoller,
  S.~K. Turitsyn, Unveiling temporal correlations characteristic of a phase
  transition in the output intensity of a fiber laser, Phys. Rev. Lett. 116
  (2016) 033902.
\newblock \href {http://dx.doi.org/10.1103/PhysRevLett.116.033902}
  {\path{doi:10.1103/PhysRevLett.116.033902}}.

\bibitem{LinKhurramHong2016}
H.~Lin, A.~Khurram, Y.~Hong, Time-delay signatures in multi-transverse mode
  {VCSELs} subject to double-cavity polarization-rotated optical feedback,
  Optics Communications 377 (2016) 128 -- 138.
\newblock \href {http://dx.doi.org/10.1016/j.optcom.2016.05.044}
  {\path{doi:10.1016/j.optcom.2016.05.044}}.

\bibitem{Zunino2016679}
L.~Zunino, H.~V. Ribeiro, Discriminating image textures with the multiscale
  two-dimensional complexity-entropy causality plane, Chaos, Solitons \&
  Fractals 91 (2016) 679 -- 688.
\newblock \href {http://dx.doi.org/10.1016/j.chaos.2016.09.005}
  {\path{doi:10.1016/j.chaos.2016.09.005}}.

\bibitem{RossoLarrondoMartinPlastinoFuentes2007}
O.~A. Rosso, H.~A. Larrondo, M.~T. Martin, A.~Plastino, M.~A. Fuentes,
  Distinguishing noise from chaos, Phys. Rev. Lett. 99 (2007) 154102.
\newblock \href {http://dx.doi.org/10.1103/PhysRevLett.99.154102}
  {\path{doi:10.1103/PhysRevLett.99.154102}}.

\bibitem{LopezManciniCalbet1995}
R.~L\'opez-Ruiz, H.~L. Mancini, X.~Calbet, A statistical measure of complexity,
  Phys. Lett. A 209 (1995) 321.
\newblock \href {http://dx.doi.org/10.1016/0375-9601(95)00867-5}
  {\path{doi:10.1016/0375-9601(95)00867-5}}.

\bibitem{anteneodo1996some}
C.~Anteneodo, A.~R. Plastino, Some features of the
  {L{\'o}pez-Ruiz-Mancini-Calbet} ({LMC}) statistical measure of complexity,
  Phys. Lett. A 223~(5) (1996) 348--354.
\newblock \href {http://dx.doi.org/10.1016/S0375-9601(96)00756-6}
  {\path{doi:10.1016/S0375-9601(96)00756-6}}.

\bibitem{LambertiMartinPlastinoRosso2004}
P.~W. Lamberti, M.~T. Martin, A.~Plastino, O.~A. Rosso, Intensive entropic
  non-triviality measure, Physica A 334 (2004) 119.
\newblock \href {http://dx.doi.org/10.1016/j.physa.2003.11.005}
  {\path{doi:10.1016/j.physa.2003.11.005}}.

\bibitem{RibeiroJaureguiZuninoLenzi2017}
H.~V. Ribeiro, M.~Jauregui, L.~Zunino, E.~K. Lenzi, Characterizing time series
  via complexity-entropy curves, Phys. Rev. E 95 (2017) 062106.
\newblock \href {http://dx.doi.org/10.1103/PhysRevE.95.062106}
  {\path{doi:10.1103/PhysRevE.95.062106}}.

\bibitem{Tsallis1988}
C.~Tsallis, Possible generalization of the {B}oltzmann-{G}ibbs statistics, J.
  Stat. Phys. 52 (1988) 479.
\newblock \href {http://dx.doi.org/10.1007/BF01016429}
  {\path{doi:10.1007/BF01016429}}.

\bibitem{gray2011entropy}
R.~M. Gray, Entropy and Information Theory, Springer Science \& Business Media,
  2011.

\bibitem{cover2012elements}
T.~M. Cover, J.~A. Thomas, Elements of Information Theory, John Wiley \& Sons,
  2012.

\bibitem{beck2009generalised}
C.~Beck, Generalised information and entropy measures in physics, Contemporary
  Physics 50~(4) (2009) 495--510.
\newblock \href {http://dx.doi.org/10.1080/00107510902823517}
  {\path{doi:10.1080/00107510902823517}}.

\bibitem{renyi1961}
A.~Rényi, \href{http://projecteuclid.org/euclid.bsmsp/1200512181}{On measures
  of entropy and information}, in: Proceedings of the Fourth Berkeley Symposium
  on Mathematical Statistics and Probability, Volume 1: Contributions to the
  Theory of Statistics, University of California Press, Berkeley, Calif., 1961,
  pp. 547--561.
\newline\urlprefix\url{http://projecteuclid.org/euclid.bsmsp/1200512181}

\bibitem{kannathal2005entropies}
N.~Kannathal, M.~L. Choo, U.~R. Acharya, P.~Sadasivan, Entropies for detection
  of epilepsy in {EEG}, Computer Methods and Programs in Biomedicine 80~(3)
  (2005) 187--194.
\newblock \href {http://dx.doi.org/10.1016/j.cmpb.2005.06.012}
  {\path{doi:10.1016/j.cmpb.2005.06.012}}.

\bibitem{baraniuk2001measuring}
R.~G. Baraniuk, P.~Flandrin, A.~J. E.~M. Janssen, O.~J.~J. Michel, Measuring
  time-frequency information content using the {R{\'e}nyi} entropies, IEEE
  Transactions on Information Theory 47~(4) (2001) 1391--1409.
\newblock \href {http://dx.doi.org/10.1109/18.923723}
  {\path{doi:10.1109/18.923723}}.

\bibitem{portesi1996generalized}
M.~Portesi, A.~Plastino, Generalized entropy as a measure of quantum
  uncertainty, Physica A 225~(3-4) (1996) 412--430.
\newblock \href {http://dx.doi.org/10.1016/0378-4371(95)00475-0}
  {\path{doi:10.1016/0378-4371(95)00475-0}}.

\bibitem{islam2015measuring}
R.~Islam, R.~Ma, P.~M. Preiss, M.~E. Tai, A.~Lukin, M.~Rispoli, M.~Greiner,
  Measuring entanglement entropy in a quantum many-body system, Nature
  528~(7580) (2015) 77--83.
\newblock \href {http://dx.doi.org/10.1038/nature15750}
  {\path{doi:10.1038/nature15750}}.

\bibitem{sahoo2004thresholding}
P.~K. Sahoo, G.~Arora, A thresholding method based on two-dimensional {Renyi's}
  entropy, Pattern Recognition 37~(6) (2004) 1149--1161.
\newblock \href {http://dx.doi.org/10.1016/j.patcog.2003.10.008}
  {\path{doi:10.1016/j.patcog.2003.10.008}}.

\bibitem{MartinPlastinoRosso2006}
M.~T. Martin, A.~Plastino, O.~A. Rosso, Generalized statistical complexity
  measures: geometrical and analytical properties, Physica A 369 (2006) 439.
\newblock \href {http://dx.doi.org/10.1016/j.physa.2005.11.053}
  {\path{doi:10.1016/j.physa.2005.11.053}}.

\bibitem{ErvenHarremos2014}
T.~{van Erven}, P.~Harrem\"oes, R\'enyi divergence and {K}ullback–{L}eibler
  divergence, IEEE Transactions on Information Theory 60~(7) (2014) 3797--3820.
\newblock \href {http://dx.doi.org/10.1109/TIT.2014.2320500}
  {\path{doi:10.1109/TIT.2014.2320500}}.

\bibitem{Lin1991}
J.~Lin, Divergence measures based on the {S}hannon entropy, IEEE Transactions
  on Information Theory 37 (1991) 145--151.
\newblock \href {http://dx.doi.org/10.1109/18.61115}
  {\path{doi:10.1109/18.61115}}.

\bibitem{BeckSchogl}
C.~Beck, F.~Schögl, Thermodynamics of Chaotic Systems: An Introduction,
  Cambridge Nonlinear Science Series, Cambridge University Press, 1993.
\newblock \href {http://dx.doi.org/10.1017/CBO9780511524585}
  {\path{doi:10.1017/CBO9780511524585}}.

\bibitem{Mandelbrot1968}
B.~B. Mandelbrot, J.~W. {Van Ness}, Fractional {Brownian} motions, fractional
  noises and applications, SIAM Rev. 10~(4) (1968) 422.
\newblock \href {http://dx.doi.org/10.1137/1010093}
  {\path{doi:10.1137/1010093}}.

\bibitem{Shevchenko2015}
G.~Shevchenko, Fractional {Brownian} motion in a nutshell, International
  Journal of Modern Physics: Conference Series 36 (2015) 1560002.
\newblock \href {http://dx.doi.org/10.1142/S2010194515600022}
  {\path{doi:10.1142/S2010194515600022}}.

\bibitem{Wijeratne2015}
C.~Wijeratne, H.~Bessaih, Fractional {Brownian} motion and an application to
  fluids, in: Stochastic Equations for Complex Systems, Springer, 2015, pp.
  37--52.

\bibitem{Hosking}
J.~R.~M. Hosking, Modeling persistence in hydrological time series using
  fractional differencing, Water Resources Research 20~(12) (1984) 1898--1908.
\newblock \href {http://dx.doi.org/10.1029/WR020i012p01898}
  {\path{doi:10.1029/WR020i012p01898}}.

\bibitem{BandtShiha2007}
C.~Bandt, F.~Shiha, Order patterns in time series, Journal of Time Series
  Analysis 28~(5) (2007) 646--665.
\newblock \href {http://dx.doi.org/10.1111/j.1467-9892.2007.00528.x}
  {\path{doi:10.1111/j.1467-9892.2007.00528.x}}.

\bibitem{zunino2015permutation}
L.~Zunino, F.~Olivares, O.~A. Rosso, Permutation min-entropy: An improved
  quantifier for unveiling subtle temporal correlations, EPL 109~(1) (2015)
  10005.
\newblock \href {http://dx.doi.org/10.1209/0295-5075/109/10005}
  {\path{doi:10.1209/0295-5075/109/10005}}.

\bibitem{Amigo2006}
J.~M. Amig{\'o}, L.~Kocarev, J.~Szczepanski, Order patterns and chaos, Phys.
  Lett. A 355~(1) (2006) 27--31.
\newblock \href {http://dx.doi.org/10.1016/j.physleta.2006.01.093}
  {\path{doi:10.1016/j.physleta.2006.01.093}}.

\bibitem{Amigo2007}
J.~M. Amig\'o, S.~Zambrano, M.~A.~F. Sanju\'an, True and false forbidden
  patterns in deterministic and random dynamics, EPL 79 (2007) 50001.
\newblock \href {http://dx.doi.org/10.1209/0295-5075/79/50001}
  {\path{doi:10.1209/0295-5075/79/50001}}.

\bibitem{Amigo2008}
J.~M. Amig\'o, S.~Zambrano, M.~A.~F. Sanju\'an, Combinatorial detection of
  determinism in noisy time series, EPL 83 (2008) 60005.
\newblock \href {http://dx.doi.org/10.1209/0295-5075/83/60005}
  {\path{doi:10.1209/0295-5075/83/60005}}.

\bibitem{Amigo}
J.~M. Amig\'o, Permutation Complexity in Dynamical Systems, Springer-Verlag,
  Berlin, 2010), 2010.

\bibitem{jakobson1981absolutely}
M.~V. Jakobson, Absolutely continuous invariant measures for one-parameter
  families of one-dimensional maps, Commun. Math. Phys. 81~(1) (1981) 39--88.
\newblock \href {http://dx.doi.org/10.1007/BF01941800}
  {\path{doi:10.1007/BF01941800}}.

\bibitem{huebner1989dimensions}
U.~Hubner, N.~B. Abraham, C.~O. Weiss, Dimensions and entropies of chaotic
  intensity pulsations in a single-mode far-infrared {NH}$_3$ laser, Phys. Rev.
  A 40~(11) (1989) 6354.
\newblock \href {http://dx.doi.org/10.1103/PhysRevA.40.6354}
  {\path{doi:10.1103/PhysRevA.40.6354}}.

\bibitem{tspred}
Time series {A} of the {S}anta {F}e time series competition,
  \url{https://rdrr.io/cran/TSPred/man/SantaFe.A.html}, accessed: 2017-01-16.

\bibitem{eia}
U.{S.} {E}nergy {I}nformation {A}dministration, \url{http://www.eia.gov/},
  accessed: 2017-01-16.

\bibitem{carbonell1993asymmetry}
M.~Carbonell, R.~Oliver, J.~L. Ballester,
  \href{http://adsabs.harvard.edu/abs/1993A%26A...274..497C}{On the asymmetry
  of solar activity}, Astron. Astrophys. 274 (1993) 497.
\newline\urlprefix\url{http://adsabs.harvard.edu/abs/1993A%26A...274..497C}

\bibitem{paluvs1999sunspot}
M.~Palu{\v{s}}, D.~Novotn{\'a}, Sunspot cycle: a driven nonlinear oscillator?,
  Phys. Rev. Lett. 83~(17) (1999) 3406.
\newblock \href {http://dx.doi.org/10.1103/PhysRevLett.83.3406}
  {\path{doi:10.1103/PhysRevLett.83.3406}}.

\bibitem{timmer2000can}
J.~Timmer, What can be inferred from surrogate data testing?, Phys. Rev. Lett.
  85~(12) (2000) 2647.
\newblock \href {http://dx.doi.org/10.1103/PhysRevLett.85.2647}
  {\path{doi:10.1103/PhysRevLett.85.2647}}.

\bibitem{paluvs2000paluvs}
M.~Palu{\v{s}}, Palu{\v{s}} and {Novotn{\'a}} reply, Phys. Rev. Lett. 85~(12)
  (2000) 2648.
\newblock \href {http://dx.doi.org/10.1103/PhysRevLett.85.2648}
  {\path{doi:10.1103/PhysRevLett.85.2648}}.

\bibitem{mininni2000stochastic}
P.~D. Mininni, D.~O. G{\'o}mez, G.~B. Mindlin, Stochastic relaxation oscillator
  model for the solar cycle, Phys. Rev. Lett. 85~(25) (2000) 5476.
\newblock \href {http://dx.doi.org/10.1103/PhysRevLett.85.5476}
  {\path{doi:10.1103/PhysRevLett.85.5476}}.

\bibitem{mininni2002study}
P.~D. Mininni, D.~O. G{\'o}mez, Study of stochastic fluctuations in a shell
  dynamo, Astrophys. J. 573~(1) (2002) 454.
\newblock \href {http://dx.doi.org/10.1086/340495} {\path{doi:10.1086/340495}}.

\bibitem{de2010fast}
M.~De~Domenico, V.~Latora, Fast detection of nonlinearity and nonstationarity
  in short and noisy time series, EPL 91~(3) (2010) 30005.
\newblock \href {http://dx.doi.org/10.1209/0295-5075/91/30005}
  {\path{doi:10.1209/0295-5075/91/30005}}.

\bibitem{silso}
Sunspot index and long-term solar observations -- monthly and smoothed sunspot
  number (1749-07 to 2016-04), \url{http://www.sidc.be/silso/monthlyssnplot},
  accessed: 2017-01-16.

\end{thebibliography}

\end{document}